\newtheorem{theorem}{Theorem}
\newtheorem{lemma}[theorem]{Lemma}
\newtheorem{definition}[theorem]{Definition}
\DeclarePairedDelimiter\ceil{\lceil}{\rceil}   
\newcommand{\cF}{\mathcal F}
\newcommand{\cG}{\mathcal G}
\newcommand{\cI}{\mathcal I}
\newcommand{\cT}{\mathcal T}
\newcommand{\bbN}{\mathbb N}
\newcommand{\bbR}{\mathbb R}
\newcommand{\bbZ}{\mathbb Z}
\DeclareMathOperator{\OPT}{OPT}
\DeclareMathOperator{\ALG}{ALG}
\DeclareMathOperator{\FF}{FF}
\begin{document}

\title{Online Vector Bin Packing and Hypergraph Coloring Illuminated: Simpler Proofs and New Connections}

\author{Yaqiao Li\footnote{Concordia University, yaqiao.li@concordia.ca}, 
Denis Pankratov\footnote{Concordia University, denis.pankratov@concordia.ca}}

\maketitle

\begin{abstract}
   This paper studies the online vector bin packing (OVBP) problem and the related problem of online hypergraph coloring (OHC). Firstly, we use a double counting argument to prove an upper bound of the competitive ratio of $FirstFit$ for OVBP. Our proof is conceptually simple, and strengthens the result in \cite{VBP_STOC} by removing the dependency on the bin size parameter.  Secondly, we introduce a notion of an online incidence matrix that is defined for every instance of OHC. Using this notion, we provide a reduction from OHC 
   to OVBP, which allows us to carry known lower bounds of the competitive ratio of algorithms for OHC to OVBP. Our approach significantly simplifies the previous argument from \cite{VBP_STOC} that relied on using  intricate graph structures. In addition, we slightly improve their lower bounds. Lastly, we establish a tight bound of the competitive ratio of algorithms for OHC, where input is restricted to be a hypertree, thus resolving a conjecture in \cite{hyperC_1}. The crux of this proof lies in solving a certain combinatorial partition problem about multi-family of subsets, which might be of independent interest.
\end{abstract}

\section{Introduction and main results}  \label{sec:Intro}

The $\{0,1\}$ $d$-dimensional  \emph{vector bin packing} problem (VBP) with bin size $B$ is defined as follows.

{\noindent\bf Input:} $\cI = \{v_1, \ldots, v_n\}$ where $v_i \in \{0,1\}^d$ for $i=1, \ldots, n$. A bin size parameter $B \in \bbZ^+$.

{\noindent\bf Output:} a partition of $\cI$ into feasible bins.

A subset $X \subseteq \cI$ is called a \emph{feasible bin} if 
    $\sum_{v \in X} v \le \Vec{B}$.
Here and throughout the paper we use the notation $\Vec{B}$ to denote the all-$B$ vector 
    $(B,\ldots,B) \in \bbR^d$.
Let $\OPT(\cI)$ denote the minimum number of bins needed in a feasible bin packing for $\cI$.
If the vector space is $[0,1]^d$ instead of $\{0,1\}^d$, we call the corresponding problem $[0,1]$  VBP.
In the \emph{online} vector bin packing (OVBP), the vectors of $\cI$ arrive online one by one following a given order. An online algorithm $\ALG$ needs to, upon the arrival of every vector $v$, assign  $v$ to a bin,  while maintaining the feasibility. 
Let $\ALG(\cI)$ denote the number of bins used by $\ALG$ on the online instance $\cI$. The competitive ratio of $\ALG$ on $\cI$ is 
    $\ALG(\cI)/\OPT(\cI)$.

Bin packing is a fundamental problem in optimization, for which VBP is one of its many variants. The VBP has applications in virtual machine placement in cloud computing.
While upper bounds on competitive ratios of OVBP algorithms are easier to establish, it took two decades to establish strong lower bounds in the breakthrough \cite{VBP_STOC}. In establishing the lower bounds, \cite{VBP_STOC} built a connection between OVBP and online graph coloring. One part of this work is to show that connection can be generalized to online hypergraph coloring (OHC).
For simplicity and uniformity, we describe vertex coloring in $k$-uniform hypergraph language for $k\ge 2$, while understanding that the case of graphs corresponds to $k=2$.

{\noindent\bf Input:} $H = (V,E)$ is a $k$-uniform hypergraph.

{\noindent\bf Output:} a partition of $V$ into disjoint subsets so that no hyperedge lies entirely in one subset.

Such a partition is called a proper coloring where no hyperedge is monochromatic. Let $\chi(H)$ denote the chromatic number of $H$, i.e., the least number of colors needed in a proper coloring. 
In online hypergraph coloring in the vertex arrival model, the vertices of $H$ arrive online one by one following a given order. Upon the arrival of a vertex, its hyperedges incident to existing vertices are revealed. Note that an online hyperedge is revealed only when its last vertex arrives. An online coloring algorithm $\ALG$ must assign an irrevocable color upon the arrival of each online vertex while maintaining a proper coloring. Let $\ALG(H)$ denote the number of colors used by $\ALG$. The competitive ratio of $\ALG$ on $H$ is 
    $\ALG(H) / \chi(H)$. 
A simple online coloring algorithm is $FirstFit$: color every online vertex with the smallest natural number that does not create any monochromatic hyperedges.
See Figure \ref{fig:online_graph} for an illustration: 
    $FirstFit(G) = FirstFit(H) = 3$, 
but $\chi(G) = \chi(H) = 2$.
For arbitrary graphs (resp. hypergraphs), almost tight upper and lower bounds on the competitive ratio
\cite{graph_upperbd, graph_lowerbd, graph_vishwanathan1992randomized} (resp. \cite{hyperC_1,hyperC_randomized}) are known for both deterministic and randomized algorithms. Unless stated otherwise, the term ``algorithm'' means a deterministic online algorithm.

\begin{figure}[t]        
    \centering
    \includegraphics[scale=.7]{./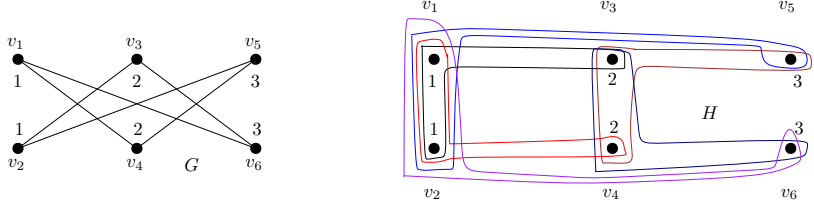}
    \caption{Online graph $G$ and online $3$-uniform hypergraph $H$ colored by $FirstFit$, vertices arriving in order $v_1, \ldots, v_6$.}    \label{fig:online_graph}
\end{figure}


\subsection{Main Results}       \label{sec:Results}

Firstly, we prove some upper bounds on the competitive ratio of $FirstFit$ for OVBP. 
The algorithm $FirstFit$ for $\{0,1\}$ OVBP is the following: upon the arrival of each online vector $v$, put $v$ into the first feasible bin that can accommodate $v$; if none of the previously opened bins can accommodate $v$ then open a new bin and place $v$ into it. A competitive ratio $2B\sqrt{d}$ is given in \cite{VBP_STOC}. In Section \ref{sec:upper_bound}, we use a conceptually simple double counting argument to show an upper bound that is independent of the bin size parameter $B$. Our double counting proof is also applicable to $[0,1]$ OVBP. Furthermore, the double counting method might be useful in analyzing $FirstFit$ for other online problems.

\begin{theorem} \label{thm:double_counting_FF}
    For the $\{0,1\}$ $d$-dimensional OVBP with bin size $B \in \bbZ^+$, the competitive ratio of $FirstFit$ is 
        $< \sqrt{2d} + 2$.
\end{theorem}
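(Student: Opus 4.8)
The plan is to first scale away the bin-size parameter $B$, and then run a double-counting argument over the $d$ coordinates. Dividing every input vector (and the bin bound) by $B$ turns a $\{0,1\}^d$ instance with bin size $B$ into a $\{0,1/B\}^d\subseteq[0,1]^d$ instance with bin size $1$: feasibility of every bin is preserved, $FirstFit$ makes exactly the same decisions, and $\OPT$ is unchanged, so the competitive ratio is the same. Hence it suffices — and I would in fact prove the stronger statement, which also yields the claim mentioned in the introduction for $[0,1]$ VBP — that $FirstFit$ has competitive ratio $<\sqrt{2d}+2$ on \emph{every} $[0,1]^d$ instance with unit bins. This scaling step is exactly what removes the dependence on $B$.

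Now fix a $[0,1]^d$ unit-bin instance, let $FirstFit$ open bins $B_1,\dots,B_m$ in this order, put $k=\OPT$, and let $b^i=\sum_{v\in B_i}v\in[0,1]^d$ be the final load of $B_i$. Two observations drive the argument. (i) The items are partitioned by $FirstFit$ among $B_1,\dots,B_m$ and also by an optimal solution into $k$ unit bins, so $\sum_{i=1}^m b^i=\sum_{\text{items }v}v$ is at most $k$ in every coordinate; that is, $\sum_{i=1}^m b^i_c\le k$ for each $c$. (ii) For $i<j$, the item $u_j$ that opened $B_j$ did not fit into $B_i$, so some coordinate satisfies $(\text{load of }B_i\text{ at that moment})_c+(u_j)_c>1$; since loads only grow and $u_j\le b^j$ coordinatewise, we get, for each pair $i<j$, a \emph{witness} coordinate $c_{ij}$ with $b^i_{c_{ij}}+(u_j)_{c_{ij}}>1$, and in particular $b^i_{c_{ij}}>\tfrac12$ or $(u_j)_{c_{ij}}>\tfrac12$.

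For each pair $i<j$ fix such a $c_{ij}$ and write $\binom m2=\sum_{c=1}^d p_c$, where $p_c$ is the number of pairs with witness $c$. The heart of the proof is a per-coordinate bound of the shape $p_c\lesssim k^2$, using only (i): for a pair witnessed at $c$ the two values $b^i_c$ and $(u_j)_c$ sum to more than $1$, so one of them exceeds $\tfrac12$; by (i) at most $\approx 2k$ indices $i$ have $b^i_c>\tfrac12$, and — since the openers $u_j$ lie in distinct bins — $\sum_j(u_j)_c\le\sum_j b^j_c\le k$, so at most $\approx 2k$ indices $j$ have $(u_j)_c>\tfrac12$; a case split on which side is large, together with a convexity estimate on how many small-value partners a nearly saturated coordinate can accommodate, bounds $p_c$. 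Summing over $c$ gives $\binom m2\lesssim dk^2$, which rearranges to $m<\sqrt{2d}\,k+O(k)<(\sqrt{2d}+2)k$. The step I expect to be the real obstacle is making this per-coordinate estimate \emph{sharp}: the naive count only yields $p_c\lesssim 2k^2$ (hence the weaker bound $2\sqrt d$ in place of $\sqrt{2d}$), and a single coordinate can be a valid witness for as many as $\Theta(m)$ pairs when some load $b^i_c$ is close to $1$. Overcoming this requires assigning the witnesses so that crowded coordinates are used sparingly, exploiting the load bound $\sum_i b^i_c\le k$ on both sides simultaneously, and keeping the lower-order terms small enough to be absorbed into the additive ``$+2$''.
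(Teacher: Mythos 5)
There is a genuine gap, and it appears already in the reduction step. Scaling by $B$ does turn the $\{0,1\}^d$, bin-size-$B$ instance into an equivalent unit-bin instance, but the resulting vectors have entries in $\{0,1/B\}$, and you then discard this structure by proposing to prove the ``stronger statement'' that $FirstFit$ is $(\sqrt{2d}+2)$-competitive on \emph{every} $[0,1]^d$ instance with unit bins. That statement is false: for $[0,1]$ OVBP with $B=1$ the competitive ratio of $FirstFit$ is at least $d/4$ (Theorem~\ref{thm:lowerbounds_realcase}(1); this is the classical $\Omega(d)$ behaviour), which exceeds $\sqrt{2d}+2$ already for moderate $d$. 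This is also why the paper's own $[0,1]$ upper bound (Theorem~\ref{thm:double_counting_FF_real}) requires $B>1$ and degrades as $B\to 1$. So the target you reduce to cannot be proved, and any correct argument must retain the $\{0,1/B\}$ (equivalently, integrality) structure of the scaled instance.

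The counting core also does not go through, even restricted to the scaled instances, and you yourself flag the key estimate as ``the real obstacle.'' Your plan rests on the dichotomy $b^i_{c}>\tfrac12$ or $(u_j)_{c}>\tfrac12$ for a witness coordinate, but after scaling every entry of every item equals $1/B\le\tfrac12$ (for $B\ge 2$), so the second alternative never occurs; a rejection at coordinate $c$ forces bin $i$ to be completely full there, of which there are at most $k=\OPT$ per coordinate, while the number of openers with a nonzero entry at $c$ can be as large as $Bk$. Hence the per-coordinate pair count is only bounded by $Bk^2$, and summing gives $\binom{m}{2}\le dBk^2$, i.e.\ $m=O(k\sqrt{dB})$ --- the $B$-dependence you set out to remove reappears. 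The paper's proof takes a different route that avoids both problems: let $p$ be the maximum number of $1$'s in a row, so $\OPT\ge p/B$; every vector placed by $FirstFit$ in bin $i$ was rejected by each of bins $1,\dots,i-1$, each rejection is charged to a $1$-entry of that vector lying in a row where the rejecting bin is full, and each $1$-entry can absorb at most $(p-1)/B$ such charges, so the vector carries at least $B(i-1)/(p-1)$ ones; since every non-final bin contains at least $B$ vectors, the total number of ones is at least of order $B^2\FF^2/p$, while it is trivially at most $dp$, yielding $\FF<\sqrt{2d}\,p/B+O(1)\le\sqrt{2d}\cdot\OPT+O(1)$. Counting ones over \emph{all} items against \emph{all} earlier bins (rather than pairs of bins and their openers) is precisely what buys the $B$-free bound; if you want to salvage your framework you would need to replace the $\tfrac12$-dichotomy by this kind of charging argument.
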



Secondly, we give much simpler proofs of improved lower bounds  on the competitive ratio of arbitrary algorithms for OVBP. In \cite{VBP_STOC}, the OVBP lower bounds are proved via a reduction from online graph coloring, instead of online \emph{hypergraph} coloring. To make their reduction work for $B\ge 2$, \cite{VBP_STOC} devise an iterative ``filtering algorithm''
to handle certain $K_{B+1}$-free graphs. In particular, their filtering algorithm relies on exploiting the structure of Halld{\'o}rsson-Szegedy graphs \cite{graph_lowerbd}. It is asked in \cite{VBP_STOC} whether this can be eliminated. We answer this question in the affirmative. We define an online incidence matrix for every online hypergraph, and provide a reduction that directly translates lower bounds of OHC to lower bounds from OVBP, hence significantly simplify the proof of \cite{VBP_STOC}. The idea of the online incidence matrix is implicit in \cite{VBP_STOC}, though this term was not used, and it was only applied to graphs not hypergraphs. As a byproduct of applying lower bounds from OHC directly, we remove the extra $\log$ factor for the case $B\ge 2$ in \cite{VBP_STOC}. 
We later show how these lower bounds can be generalized to $[0,1]$ OVBP and randomized algorithms.

\begin{theorem}     \label{thm:lowerbounds}
    For $\{0,1\}$  $d$-dimensional OVBP with bin size $B\in \bbZ^+$, 
    \begin{enumerate}[(1)]
        \item for $B=1$, the competitive ratio of $FirstFit$ is 
        $> (\sqrt{2d} - 1) / 4$. 

        \item for $B= 1$, the competitive ratio of any online algorithm is 
        $> 2(\sqrt{2d}-1) / (\log_2 (\sqrt{2d} -1))^2$.
        
        \item for $B\ge 2$, the competitive ratio of any online algorithm is 
        $> (((B+1)! \cdot d)^{1/(B+1)} - 1) / 2B$.
    \end{enumerate}
    Furthermore, the lower bounds can be achieved via $d$-dimensional OVBP instances with the following property: for (1) and (3), $\OPT$ of the instance is $2$; for (2), $\OPT$ of the instance is 
        $\le \log_2 (\sqrt{2d}+1)$.
\end{theorem}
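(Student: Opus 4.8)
The plan is to establish all three items uniformly, via a single reduction from online hypergraph coloring (OHC) to OVBP, built around the \emph{online incidence matrix} announced in the introduction. Given a $k$-uniform hypergraph $H$ with vertices $u_1,\dots,u_n$ listed in arrival order, I define a $d$-dimensional OVBP instance $\cI(H)$ with $d=\binom nk$ and bin size $B=k-1$, presenting one vector $x_{u_i}\in\{0,1\}^d$ as each vertex $u_i$ arrives. The coordinate indexed by a $k$-subset $T=\{u_{i_1},\dots,u_{i_k}\}$ with $i_1<\dots<i_k$ is populated by a \emph{reserve-then-confirm} rule: when $u_{i_j}$ with $j<k$ arrives we do not yet know whether $T$ is a hyperedge, so we put a $1$ in that coordinate of $x_{u_{i_j}}$ (a reservation); when the last vertex $u_{i_k}$ arrives, $T$ has just been revealed to be a hyperedge or not, and we put a $1$ there in $x_{u_{i_k}}$ exactly in the former case. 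Crucially this rule uses only the information an online coloring algorithm possesses when $u_{i_j}$ arrives, so it is compatible with the OHC model in which a hyperedge surfaces only with its last vertex (and, for general $d$, one pads with unused coordinates).

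Two facts drive the reduction. First, $\OPT(\cI(H))=\chi(H)$: for a set $S$ of presented vectors and a $k$-subset $T$, the $T$-coordinate of $\sum_{v\in S}x_v$ equals $|S\cap T|$ if $T\in E(H)$ and is at most $k-1$ otherwise, so $S$ is a feasible bin iff $|S\cap T|\le k-1=B$ for every hyperedge $T$, i.e.\ iff $S$ is an independent set, i.e.\ a proper color class. Second, any online OVBP algorithm $A$ yields an online OHC algorithm $A'$ by $A'(v):=A(x_v)$; since $x_v$ is computable from the revealed part of $H$, $A'$ is a legitimate online algorithm receiving no more information than an ordinary online coloring algorithm, it produces a proper coloring (feasibility of $A$'s packing $\Leftrightarrow$ properness), and it uses exactly as many colors as $A$ uses bins. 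Hence any lower bound on the competitive ratio of online coloring over a family of $k$-uniform hypergraphs transfers verbatim to online $d$-dimensional OVBP with $B=k-1$, after substituting $n$ by $\Theta(d^{1/k})$ through $d=\binom nk$. In particular the ``filtering algorithm'' of \cite{VBP_STOC} for $B\ge2$, and its extra $\log$ factor, are bypassed entirely.

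It then remains to feed in the right coloring lower bound with $k=B+1$ and compute. For (1) take $H$ to be a bipartite graph on which $FirstFit$-coloring is forced to use $\Omega(n)$ colors (e.g.\ a $K_{t,t}$ minus a perfect matching, arriving in interleaved order); one checks directly that ``the first bin that accommodates $x_v$'' is exactly ``the least color not used on a neighbor of $v$'', so $FirstFit$ on $\cI(H)$ literally runs $FirstFit$-coloring on $H$, while $\OPT(\cI(H))=\chi(H)=2$. For (2) take $H$ to be a Halld{\'o}rsson--Szegedy-type graph \cite{graph_lowerbd}, which is $O(\log n)$-chromatic yet forces competitive ratio $\Omega(n/\log^2 n)$ on every online algorithm; passing through $d=\binom n2$ yields the stated bound together with $\OPT(\cI(H))=\chi(H)\le\log_2(\sqrt{2d}+1)$ (using $n-1\le\sqrt{2d}$ when $n(n-1)=2d$). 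For (3) take $H$ to be a $2$-colorable $(B+1)$-uniform hypergraph on which every online algorithm is forced to use more than $(n-1)/B$ colors --- the hypergraph analogue of the crown graph (cf.~\cite{hyperC_1}), obtainable by forbidding each color once it has been used $B$ times while maintaining a fixed $2$-coloring --- and pass through $d=\binom{n}{B+1}$, noting $((B+1)!\,d)^{1/(B+1)}=(n(n-1)\cdots(n-B))^{1/(B+1)}<n$.

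I expect the main obstacle to be pinning down the online incidence matrix for $k\ge3$: verifying that the reserve-then-confirm rule (i) is genuinely implementable within the OHC model, (ii) does not inflate $\OPT$ by accidentally activating constraints on $k$-subsets that are not hyperedges, and (iii) loses no information in either direction of the correspondence --- all while bookkeeping the combinatorics so that $d=\binom nk$ reproduces exactly the $\sqrt{2d}$ and $((B+1)!\,d)^{1/(B+1)}$ expressions. A secondary point requiring care is arranging the hard (hyper)graphs so that their chromatic numbers are precisely as asserted in the ``Furthermore'' clause ($2$ for (1) and (3), at most $\log_2(\sqrt{2d}+1)$ for (2)), which may call for small modifications of the off-the-shelf constructions.
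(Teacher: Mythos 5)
Your proposal is correct and follows essentially the same route as the paper: the reserve-then-confirm rule is exactly the paper's online incidence matrix, your feasibility/coloring correspondence is its Lemma~\ref{lem:reduction_correct}, and the three bounds are obtained by plugging in the same hard instances (the crown-graph FirstFit construction, the Halld{\'o}rsson--Szegedy graphs, and the $2$-colorable $k$-uniform hypergraphs of \cite{hyperC_1}) with $k=B+1$, $n$ the largest integer with $\binom{n}{k}\le d$, and zero-padding for general $d$. The only remaining work is the routine arithmetic turning $n>(k!\,d)^{1/k}-1$ into the stated constants, which your sketch already indicates correctly.
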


Lastly, we establish a tight bound on the competitive ratio of algorithms for coloring online hypertrees. A hypertree is a connected hypergraph without any cycle. A cycle in a hypergraph is a vertex edge sequence
    $v_1 e_1 v_2 e_2 \cdots v_t e_t v_{t+1}$ 
such that 
    $\{v_i, v_{i+1}\} \subseteq e_i$
for every $1 \le i \le t$, and all edges and vertices are distinct except
    $v_1 = v_{t+1}$. 
The authors of 
\cite{hyperC_1} observed that for every online $k$-uniform hypertree $H$ on $n$ vertices, 
    $FirstFit(H) \le 1 + \log_k(n)$.
The proof follows similar steps as the proof for online trees, i.e., $k = 2$, but requires extra arguments.  
It is known that every online coloring algorithms on trees has competitive ratio at least $1+\log_2(n)$, i.e., that of $FirstFit$. This lower bound is established by a forest construction, see, e.g.,
\cite{graph_gyarfas1988line}.  
The authors of \cite{hyperC_1} conjectured that $FirstFit$ also achieves the best possible competitive ratio on online hypertrees. A natural attempt is to generalize the forest construction from trees to hypertrees. The crux lies in solving a certain partition problem about multi-family of subsets. We show in Section \ref{sec:multi-family} that a suitable combination of a forward greedy strategy with a backward greedy strategy solves that partition problem. With this solution, in Section \ref{sec:hypertree} we confirm the conjecture from \cite{hyperC_1}.


\begin{theorem} \label{thm:hypertree_lb}
    Let $\ALG$ be any online coloring algorithm for $k$-uniform hypertrees. Then, there exists a $k$-uniform online hypertree with at most $k^{m-1}$ vertices such that $\ALG$ uses at least $m$ colors.
\end{theorem}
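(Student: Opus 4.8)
The plan is to construct the hard online hypertree by adversarial induction on $m$, exactly mirroring the classical tree construction for online graph coloring but lifting the "attach a new vertex to a set of differently-colored old vertices" trick to the $k$-uniform setting. For graphs ($k=2$), the standard construction maintains a forest of $2^{m-1}$ trees, one for each possible color in $\{1,\dots,m\}$ an algorithm might have just used, and then glues together one tree of each color by a fresh vertex to force color $m+1$. For hypertrees, a single fresh vertex $v$ together with $k-1$ old vertices $u_1,\dots,u_{k-1}$ forms a hyperedge; to force $\ALG$ to avoid a color $c$ on $v$, the adversary needs the $u_i$'s to already carry all colors of some monochromatic-except-for-$v$ configuration. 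The combinatorial heart is: given that at stage $m$ we have, for each nonempty color-set that could appear, a hypertree whose last vertex realizes that set, how do we combine them with one new vertex so that (i) the union stays a hypertree (connected, acyclic), and (ii) every color in $\{1,\dots,m\}$ is "blocked" at the new vertex by some hyperedge whose other $k-1$ vertices are monochromatic in that color. This is precisely the multi-family partition problem the introduction promises is solved in Section~\ref{sec:multi-family}, via a combination of a forward greedy and a backward greedy strategy, so I would invoke that result as a black box.

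Concretely, the key steps in order: (1) Set up the induction hypothesis — for every $m$ and every "type" $\tau$ (a description, in the language of the partition problem, of which colorings the last-arrived vertex can be forced into), there is an online $k$-uniform hypertree $T_{m,\tau}$ on at most $k^{m-1}$ vertices such that against $\ALG$ the last vertex is colored according to $\tau$ and at least one of a prescribed set of colors is used. (2) Base case $m=1$: a single vertex, trivially colored $1$. (3) Inductive step: to build a hypertree forcing $m$ colors, recurse to get several hypertrees forcing $m-1$ colors in various configurations, present their vertices to $\ALG$ in the appropriate interleaved online order, and then — using the Section~\ref{sec:multi-family} solution — introduce one new vertex $v$ whose incident hyperedges (each consisting of $v$ and $k-1$ already-colored vertices forming a would-be monochromatic hyperedge) rule out colors $1,\dots,m-1$, forcing $\ALG$ to use color $m$ on $v$. (4) Verify acyclicity: the new vertex has degree-$1$-into-each-subtree structure, so no cycle is created; connectivity is immediate since $v$ touches every piece. (5) Count vertices: the recursion multiplies the number of sub-hypertrees by at most $k$ per level (one new vertex absorbs the branching factor), giving the bound $k^{m-1}$; I would set up the recursion so the arithmetic is $N(m) \le k \cdot N(m-1)$ with $N(1)=1$.

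The main obstacle — and the only genuinely new difficulty over the $k=2$ case — is step (3)'s requirement that a \emph{single} new vertex simultaneously block all $m-1$ old colors while the whole object remains a hypertree (in particular acyclic) and the vertex count stays at $k^{m-1}$ rather than blowing up. Naively, blocking color $c$ needs $k-1$ old vertices all colored $c$ sitting in a common hyperedge with $v$; reusing old vertices across different hyperedges incident to $v$ is what risks creating cycles, while not reusing them risks an exponential-in-$k$ vertex blowup. Resolving this tension is exactly what the multi-family partition result does: it tells us how to \emph{partition} the pool of recursively-constructed monochromatic groups among the hyperedges at $v$ so that each color is covered, no vertex is double-used in a cycle-creating way, and the total size is controlled — the "forward greedy plus backward greedy" combination ensures both the covering and the size bound hold simultaneously. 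So the bulk of the real work is deferred to Section~\ref{sec:multi-family}, and the proof of Theorem~\ref{thm:hypertree_lb} itself is the (careful but routine) bookkeeping of turning that partition into an online hypertree instance and tracking the vertex count and color-forcing through the induction.
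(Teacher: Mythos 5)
Your overall strategy coincides with the paper's --- induct on $m$, build a forest of recursively constructed hypertrees, add a single new vertex $v$, and attach to $v$ hyperedges each consisting of $k-1$ vertices of one common color drawn from $k-1$ distinct components (one vertex per component, so no cycle arises), with the existence of such a selection supplied by the Section~\ref{sec:multi-family} result. The gap is in the inductive setup, which is exactly where both the $k^{m-1}$ bound and the applicability of that result are decided, and your proposal is off in the two places where it is specific. First, an adversary cannot maintain hypertrees whose ``last vertex is colored according to $\tau$'': the algorithm chooses the colors, so the only invariant the induction can guarantee is the \emph{number} of colors used on each piece. Second, recursing on ``several hypertrees forcing $m-1$ colors'' does not yield $N(m)\le k\,N(m-1)$: since distinct hyperedges at $v$ must use pairwise disjoint sets of components (otherwise a cycle through $v$ appears), blocking all $m-1$ old colors requires $(m-1)(k-1)$ components in the worst case, so a forest built from level-$(m-1)$ pieces alone has roughly $(m-1)(k-1)k^{m-2}$ vertices, which exceeds $k^{m-1}$. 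The paper's forest is graded: $k-1$ disjoint copies of $T_i$ for \emph{every} $i=1,\dots,m-1$, giving exactly $1+\sum_{i=1}^{m-1}(k-1)k^{i-1}=k^{m-1}$ vertices, and this grading is precisely what produces the hypothesis of Theorem~\ref{thm:diverse-implies-starry}: the multi-family of color sets is $(k-1,[m-1])$-diverse because block $i$ contributes $k-1$ sets of size at least $i$. Relatedly, the partition theorem does not allocate ``recursively-constructed monochromatic groups'': the components are multicolored, and the starry partition is what groups them into blocks of $k-1$ components sharing a designated color, after which one vertex of that color is picked from each component. Finally, you need the paper's case split: if the forest already exhibits at least $m$ colors, merge arbitrarily; only when exactly $m-1$ colors are used (WLOG $[m-1]$) is the starry partition invoked to block all of them and force color $m$ at $v$.
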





\section{Improved upper bounds for FirstFit of OVBP by double counting}   \label{sec:upper_bound}



\begin{proof}[Proof of Theorem \ref{thm:double_counting_FF}]
    Let $\cI = \{v_1, \ldots, v_n\}$ be an OVBP instance, with the online vectors arriving in that order. 
    Viewing each $v_i$ as a column vector, we identify $\cI$ as a 0-1 matrix $\cI_{d\times n}$. Let $\OPT(\cI)$ and $\FF(\cI)$ denote the numbers of bins in offline OPT and  $FirstFit$, respectively.  Let $w(\cI)$ denote the number of $1$'s in $\cI$. We use double counting to count $w(\cI)$.  Let $p(\cI)$ denote the maximum number of $1$'s in a row of $\cI$. If $p(\cI) = 1$, then $\OPT(\cI) = \FF(\cI) = 1$.  So, we assume $p(\cI) \ge 2$. For  simplicity, we omit $\cI$ in $w(\cI)$ and other notations. Obviously, $\OPT \ge p/B$.
    
    Counting by rows of $\cI$, trivially, we have $w \le d \cdot p$. Next, we count by columns of $\cI$. In fact, we will count by bins used by $FirstFit$. We say a vector $v \in \{0,1\}^d$ is {\bf unfit} for bin $j$ if $v$ cannot be put into bin $j$ by $FirstFit$. By the definition of $FirstFit$, this means that there exists a row $r$ such that bin $j$ is full (i.e., has $B$ $1$'s) in  row $r$, and the entry of $v$ in  row $r$ is $1$.
    Since each row contains at most $p$ many $1$'s, each entry of value $1$ in $v$ can be unfit for $\le (p - 1) / B$ many bins. If $FirstFit$ puts the online vector $v$ into bin $i$, i.e., $v$ is unfit for $i-1$ bins, then, $v$ must contain at least 
        $(i-1) / ((p - 1) / B) = B(i-1) / (p - 1)$
    many $1$'s. 
    Observe also that for 
        $i=1, \ldots, \FF-1$, 
    bin $i$ contains at least $B$ vectors. Therefore, for 
        $1 \le i \le \FF-1$,  
    bin $i$ contains at least 
        $\frac{B^2(i-1)}{p - 1}$ 
    many $1$'s.  
    The last bin, i.e., bin $\FF$, contains at least one vector, which contains at least $B(\FF-1)/(p-1)$ many $1$'s.
    Summing over all bins (i.e., counting by `column') and applying the upper bound on $w$, we get 
    \begin{equation}    \label{eq:FF_ineq}
            dp \ge w
            \ge \left(\sum_{i=1}^{\FF-1} \frac{B^2(i-1)}{p - 1} \right)
            + \frac{B(\FF-1)}{p - 1} 
            = \frac{1}{p - 1} \cdot \left( B^2 \cdot \frac{(\FF-1)(\FF-2)}{2} + B (\FF-1)
            \right).
    \end{equation}
    Solving inequality \eqref{eq:FF_ineq}, we get
        $\FF \le \frac{1}{2B} \cdot 
        \Big( \sqrt{8dp^2 - 8dp + (B-2)^2} + 3B - 2 \Big)$.
    Note that
        $8dp^2 - 8dp < (2\sqrt{2d}p-1)^2$
    holds for every $d,p \ge 1$,  hence,
        $\FF < \frac{(2\sqrt{2d}p - 1) + |B-2| + 3B -2 }{2B}$.
    For $B=1$, this gives
        $\FF 
        < \sqrt{2d} p + 0.5
        \le \sqrt{2d} \cdot \OPT + 0.5$.
    For $B\ge 2$, this gives
        $\FF 
        < \sqrt{2d} \cdot p/B + 2(B-1.25) /B
        \le \sqrt{2d} \cdot \OPT + 2(B-1.25) /B$.
\end{proof}

By the same double counting proof, one can prove a similar bound for the $[0,1]$ OVBP. 

\begin{theorem} \label{thm:double_counting_FF_real}
    For the $[0,1]$ $d$-dimensional OVBP with bin size $B \in \bbR^+$ and $B > 1$, the competitive ratio of $FirstFit$ is 
        $< \sqrt{2d} \cdot \frac{B}{B-1} + \frac{2}{\OPT} \le 2\sqrt{2d} + 2$.
\end{theorem}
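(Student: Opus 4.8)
The plan is to mimic the double counting argument from the proof of Theorem \ref{thm:double_counting_FF}, replacing the integer bin size $B$ by a real $B > 1$ and being careful about how ``fullness'' of a bin is reinterpreted. As before, identify the instance $\cI$ with a nonnegative real $d \times n$ matrix whose columns are the arriving vectors, let $w$ denote the sum of all entries, let $p$ denote the maximum row sum, and note $\OPT \ge p/B$. Counting by rows gives the easy bound $w \le dp$.

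The key step is to re-examine what it means for a vector $v$ to be unfit for a bin $j$. When the entries are real, bin $j$ rejects $v$ exactly when there is a row $r$ in which the current row sum of bin $j$ plus $v_r$ exceeds $B$; since the total mass available in row $r$ across the whole instance is at most $p$, each coordinate of $v$ of value $v_r$ can cause rejection from at most $(p - v_r)/(B - v_r)$ bins, but the cleanest bound to use is that a bin which rejects $v$ has row-$r$ mass $> B - v_r \ge B - 1$, so each unit of mass in $v$ can ``block'' at most roughly $1/(B-1)$ bins worth of mass per row. Concretely: if $FirstFit$ places $v$ into bin $i$, then for each of the first $i-1$ bins there is some row in which that bin already carries more than $B-1$ mass together with a positive $v_r$; summing the total ``blocking mass'' seen by $v$ across the first $i-1$ bins is at most $p$ per relevant row, which after the bookkeeping forces $v$ to carry at least $(B-1)(i-1)/p$ units of mass (the analogue of the $B(i-1)/(p-1)$ bound, with $B-1$ in place of $B$ and $p$ in place of $p-1$). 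The one place to be slightly careful is that in the real setting a single bin need no longer contain $B$ vectors — instead each of the bins $1, \dots, \FF - 1$ carries total mass at least $B - 1$ in some row (else a later rejected vector could have gone there), which is what replaces the ``$\ge B$ vectors per bin'' observation.

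I would then assemble the column count: each of bins $1, \dots, \FF-1$ contributes at least $\frac{(B-1)^2 (i-1)}{p}$ mass using the above two facts in tandem, and bin $\FF$ contributes at least $(B-1)(\FF-1)/p$, giving an inequality of the shape
\[
    dp \ge w \ge \frac{(B-1)^2}{p}\cdot \frac{(\FF-1)(\FF-2)}{2} + \frac{(B-1)(\FF-1)}{p}.
\]
Solving this quadratic in $\FF$ yields $\FF \le \frac{p}{B-1}\sqrt{2d} + O(1)$, and dividing by $\OPT \ge p/B$ turns the leading term into $\sqrt{2d}\cdot \frac{B}{B-1}$; the additive $O(1)$ becomes $2/\OPT$ after dividing, and since $\OPT \ge 1$ and $B > 1$ gives $B/(B-1) \ge \ldots$ — actually one should just observe $\OPT \ge 1$ to bound $2/\OPT \le 2$ and note that for $B > 1$ the factor $B/(B-1)$ need not be bounded, so the final ``$\le 2\sqrt{2d}+2$'' must be read as holding under the implicit assumption $B \ge 2$ (matching the integer-case statement), which I would flag and reconcile with the theorem's hypothesis; the honest uniform bound is the first expression.

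The main obstacle I anticipate is getting the ``blocking'' accounting exactly right in the real-valued regime: with $\{0,1\}$ vectors a full bin has precisely $B$ ones in a blocking row, but with $[0,1]$ entries the blocking threshold is $B - v_r$, which varies with $v$, and the mass argument has to be arranged so that the worst case $v_r \to 1$ is what produces the clean $B-1$ denominators without losing constant factors. Once that estimate is pinned down, the rest is the same quadratic manipulation as in Theorem \ref{thm:double_counting_FF}, so I expect it to go through with only cosmetic changes.
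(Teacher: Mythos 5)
Your instinct that the delicate point is the ``blocking'' accounting in the real-valued regime is exactly right, but the resolution you adopt is the step that fails, and it is not a matter of constants or bookkeeping. In the $[0,1]$ setting the number of bins that reject a vector $v$ is governed by the \emph{support} of $v$, not by its mass: an entry $v_r=\eps>0$ triggers rejection from every bin whose row-$r$ load exceeds $B-\eps$, and there can be on the order of $p/(B-1)$ such bins in a single row no matter how small $\eps$ is. Hence a vector of total mass $(i-1)\eps$, with one tiny coordinate aimed at a nearly full ``private'' row of each of bins $1,\dots,i-1$, is rejected from all of them. Your key claim that a vector placed in bin $i$ must carry mass at least $(B-1)(i-1)/p$ is therefore false, and with it the per-bin contribution $(B-1)^2(i-1)/p$ and the quadratic inequality collapse. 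What genuinely survives of the $\{0,1\}$ argument is only a lower bound on $|\supp(v)|$ (roughly $(i-1)(B-1)/p$), and support, unlike mass, is not bounded per row by $p$, so the double count cannot be closed along the route you describe.

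This failure is realized by actual FirstFit executions. Take $B=2$, $\delta=\eps/4$ with $\eps=1/d^2$, and present, for $j=1,\dots,d$, two copies of $(1-\delta)e_j+\eps\sum_{l<j}e_l$. Both copies of the $j$-th item land in bin $j$: each is rejected from every earlier bin $l$ because bin $l$ carries load $2-2\delta$ in row $l$ and $2-2\delta+\eps>2$. So $\FF=d$, while $\OPT=2$ (put one copy of each item in each of two bins) and every row has total load about $2$, i.e.\ $p\approx 2$. This run violates your intermediate mass bound and the assembled inequality $dp\ge \frac{(B-1)^2}{p}\cdot\frac{(\FF-1)(\FF-2)}{2}+\frac{(B-1)(\FF-1)}{p}$ outright, and indeed its ratio $d/2$ exceeds $\sqrt{2d}\cdot\frac{B}{B-1}+\frac{2}{\OPT}$ once $d$ is moderately large. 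Be aware that the paper itself gives no proof of this theorem beyond the remark that ``the same double counting proof'' applies, so you were fleshing out precisely the sketch the authors had in mind; the example above shows that this sketch does not transfer without additional hypotheses (e.g.\ a lower bound on the nonzero entries of the items, so that blocking mass and blocking support are comparable), and any correct treatment must confront that. Your side remark that the final ``$\le 2\sqrt{2d}+2$'' needs $B\ge 2$ is correct but secondary to this issue.
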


\section{Lower bounds for OVBP via the online incidence matrix of an online hypergraph}      \label{sec:lower_bounds}


Let $H$ be an online $k$-uniform hypergraph of $n$ vertices. We associate $H$ with an \emph{online incidence matrix} denoted by $I_H$ of dimension ${n\choose k} \times n$, as follows. Label the rows by all ${n\choose k}$ hyperedges, label the columns by vertices according to their order of arrival in $H$. We use $v \in e$ to mean that $v$ is a vertex of hyperedge $e$. 
Upon the arrival of an \emph{online} vertex $v$, define the \emph{online} column vector for $v$ as follows.
\begin{equation}    \label{eq:def_online_incidence_matrix}
    I_H(e,v) 
    = 
    \begin{cases}
        0,  \quad &\text{if } v \not\in e, \\
        1, \quad &\text{if } v \in e, \text{ but not all vertices of $e$ have arrived}, \\
        0,  \quad &\text{if } v \in e, \text{ and $v$ is the last vertex of $e$, but hyperedge $e \not\in E(H)$}, \\
        1,  \quad &\text{if } v \in e, \text{ and $v$ is the last vertex of $e$, and hyperedge $e \in E(H)$}.
    \end{cases}
\end{equation}
Let $e(I_H)$ denote the sum of entries in row $e$, i.e., the number of $1$'s. Obviously,
\begin{equation}    \label{eq:only_two_values}
    e(I_H) 
    =
    \begin{cases}
        k-1, \quad & e\not\in E(H), \\
        k, \quad  & e\in E(H).
    \end{cases}
\end{equation}

For example, let $K$ be the online subgraph of $G$ in Figure \ref{fig:online_graph}, on only the first four vertices. 
Then, $I_K$ is given in \eqref{eg:example_online_incidence_matrix}, where the offline incidence matrix $M_K$ is also given for the purpose of comparison. For the online hypergraph $H$ in Figure \ref{fig:online_graph}, the online column vector corresponding to $v_4$ is given in \eqref{eg:example_online_incidence_matrix}, where $v_4^T$ denotes the transpose of $v_4$.
\begin{equation}    \label{eg:example_online_incidence_matrix}
\begin{split}
    M_K &= 
    \begin{pNiceMatrix}[first-row,first-col]
               & v_1 & v_2 & v_3 & v_4 \\
    v_1 v_4 & 1 & 0 & 0 & 1 \\
    v_2 v_3 & 0 & 1 & 1 & 0 
    \end{pNiceMatrix},  
    \qquad
    I_K
    =
    \begin{pNiceMatrix}[first-row,first-col]
               & v_1 & v_2 & v_3 & v_4 \\
    v_1 v_2 & 1 & 0 & 0 & 0 \\
    v_1 v_3 & 1 & 0 & 0 & 0 \\
    v_1 v_4 & 1 & 0 & 0 & 1 \\
    v_2 v_3 & 0 & 1 & 1 & 0 \\
    v_2 v_4 & 0 & 1 & 0 & 0 \\
    v_3 v_4 & 0 & 0 & 1 & 0 
    \end{pNiceMatrix}.   \\
    v_4^T &= 
    \begin{pNiceMatrix}[first-row]
                 v_1 v_2 v_4 
               & v_1v_3v_4
               & v_1v_5v_4
               & v_1v_6v_4
               & v_2v_3v_4
               & v_2v_5v_4
               & v_2v_6v_4
               & v_3v_5v_4
               & v_3v_6v_4
               & v_5v_6v_4 
               & \text{other}
               & \cdots \\
                 1
               & 0
               & 1
               & 1
               & 0
               & 1
               & 1
               & 1
               & 1
               & 1
               & 0
               & \cdots 
    \end{pNiceMatrix}. 
\end{split}
\end{equation}


Hence, given an online uniform hypergraph $H$ in the vertex arrival model, the online incidence matrix $I_H$ can be constructed \emph{online} column by column, where each online column vector of $I_H$ corresponds to an online vertex  of $H$. Conversely, given the online incidence matrix $I_H$ of some online uniform hypergraph $H$, the online uniform hypergraph  $H$ can also be recovered \emph{online} vertex by vertex.


By presenting column vectors of $I_H$ column by column, the online incidence matrix $I_H$ is an instance of $\{0,1\}$ OVBP.
Let $H$ be a  $k$-uniform online  hypergraph on $n$ vertices, 
    $k\ge 2$. 
Let 
    $d = {n \choose k}$,
    $B = k-1 \ge 1$.
Let $\ALG$ be an algorithm solving the $\{0,1\}$ $d$-dimensional OVBP with bin size $B$. For an online vector 
    $v \in \{0,1\}^d$, 
let 
    $\ALG(v)$
denote the bin to which $v$ is assigned. 
We define an online algorithm $\widetilde{\ALG}$ for coloring the $k$-uniform online hypergraph $H$, as follows. By an abuse of notation, we use $v$ to denote both an online vertex $v\in V(H)$ and its corresponding online vector $v \in \{0,1\}^d$. 
Upon the arrival of online vertex $v \in V(H)$, construct the online vector $v \in \{0,1\}^d$ defined by \eqref{eq:def_online_incidence_matrix}, run $\ALG$ on $v$, and set 
    $\widetilde{\ALG}(v) = \ALG(v)$.
    
\begin{lemma} \label{lem:reduction_correct}
    $\widetilde{\ALG}$ gives a proper coloring of the online hypergraph $H$. Furthermore,
        $\widetilde{\ALG}(H) = \ALG(I_H)$
    and 
        $\chi(H) = \OPT(I_H)$.
\end{lemma}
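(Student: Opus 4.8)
The plan is to unpack the three claims by directly translating the combinatorial meaning of the online incidence matrix construction into the bin-packing language and back. First I would establish the correspondence between colorings of $H$ and bin packings of $I_H$. A bin (color class) in either setting is a set $S$ of vertices; I claim $S$ is a feasible bin for $I_H$ (i.e.\ $\sum_{v\in S} v \le \Vec{B}$ with $B=k-1$) if and only if $S$ contains no hyperedge of $H$. Indeed, by \eqref{eq:only_two_values} the row indexed by a $k$-subset $e$ has exactly $k-1$ ones if $e\notin E(H)$ and exactly $k$ ones if $e\in E(H)$. Summing over $v\in S$, the coordinate of $\sum_{v\in S}v$ in row $e$ counts how many vertices of $e$ lie in $S$ — \emph{except} when $e\in E(H)$ and the last-arriving vertex of $e$ is not in $S$, in which case the count is one less, but then $e\not\subseteq S$ anyway. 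So row-$e$ coordinate exceeds $B=k-1$ precisely when $e\subseteq S$ and $e\in E(H)$. Hence feasibility of $S$ is exactly the condition that no hyperedge of $H$ is monochromatic inside $S$. This observation is the crux, and also the only place where the careful third/fourth cases of \eqref{eq:def_online_incidence_matrix} matter.

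Given this equivalence, $\chi(H) = \OPT(I_H)$ is immediate: proper colorings of $H$ with $c$ colors are in bijection with partitions of the vertex set into $c$ feasible bins of $I_H$, so the minima coincide. (I should note the trivial degenerate case where $H$ has an isolated vertex or a hyperedge is never completed, but this causes no issue since the equivalence above is stated at the level of vertex subsets.)

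Next I would verify the online part. The algorithm $\widetilde{\ALG}$ is well-defined because the online column vector for $v$ depends only on information available at $v$'s arrival: which earlier-revealed hyperedges contain $v$ and whether $v$ completes them — exactly what \eqref{eq:def_online_incidence_matrix} uses, and exactly what the OHC vertex-arrival model reveals. Since $\widetilde{\ALG}(v)=\ALG(v)$ for every $v$, the color classes produced by $\widetilde{\ALG}$ on $H$ are literally the bins produced by $\ALG$ on $I_H$; in particular $\widetilde{\ALG}$ uses the same number of labels, giving $\widetilde{\ALG}(H)=\ALG(I_H)$. It remains to check that $\widetilde{\ALG}$ maintains a proper coloring at all times, i.e.\ that feasibility of $\ALG$'s bins implies properness of $\widetilde{\ALG}$'s coloring. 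This follows from the equivalence of the first paragraph applied to each bin $S=\ALG^{-1}(j)$: at every time step, $\ALG$ maintains $\sum_{v\in S}v\le \Vec B$, hence no completed hyperedge of $H$ is monochromatic; uncompleted hyperedges are not yet edges of the revealed hypergraph, so there is nothing to violate.

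The main obstacle is the first paragraph — getting the feasibility-versus-monochromaticity equivalence exactly right, paying attention to the asymmetry introduced by the ``last vertex'' convention in \eqref{eq:def_online_incidence_matrix}: a present hyperedge whose final vertex has not yet been colored the same as its predecessors does not yet saturate its row, but it also is not yet monochromatic, so the two notions stay in lockstep. Everything else is bookkeeping once that is nailed down.
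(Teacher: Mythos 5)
Your proposal is correct in substance and follows essentially the same route as the paper: the whole lemma reduces to the observation, via the row sums in \eqref{eq:only_two_values}, that a vertex set $S$ is a feasible bin of $I_H$ (with $B=k-1$) exactly when $S$ contains no hyperedge of $H$, plus the remark that $\widetilde{\ALG}$'s color classes are literally $\ALG$'s bins; your additional discussion of online well-definedness is fine and only makes explicit what the paper leaves to the definition of $\widetilde{\ALG}$. One correction, though: your bookkeeping exception is stated backwards. By \eqref{eq:def_online_incidence_matrix}, if $e \in E(H)$ then \emph{every} vertex of $e$ (including the last-arriving one) contributes $1$ to row $e$, so the row-$e$ coordinate of $\sum_{v\in S} v$ is exactly $|S \cap e|$; the discrepancy occurs instead when $e \notin E(H)$ and the last-arriving vertex of $e$ lies in $S$, in which case the count is $|S\cap e|-1$. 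As literally written, your version would let a non-edge $e \notin E(H)$ with $e \subseteq S$ saturate its row to $k > B$, which would destroy the direction $\OPT(I_H) \le \chi(H)$ (a proper coloring would no longer yield a feasible packing). With the cases corrected, your stated equivalence --- row-$e$ coordinate exceeds $B$ precisely when $e \subseteq S$ and $e \in E(H)$ --- is exactly right, and the rest of your argument matches the paper's proof.
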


\begin{proof}
    By definition of $\widetilde{\ALG}$, 
        $\widetilde{\ALG}(H) = \ALG(I_H)$. 
    We show that a feasible bin packing of $I_H$ corresponds to a proper coloring of $H$, and vice versa. Indeed, by \eqref{eq:only_two_values}, bin of size $B=k-1$ can never contain an entire edge of $H$. Hence, a feasible bin packing of $I_H$ gives a proper coloring of $H$, implying 
        $\chi(H) \le \OPT(I_H)$. 
    On the other hand, let 
        $S \subseteq V(H)$ 
    be a color class in a proper coloring of $H$. So, $S$ does not contain any entire edge $e \in E(H)$. Again, by \eqref{eq:only_two_values}, 
        $\sum_{v \in S} v \le \Vec{B}$.
    Hence, a proper coloring of $H$ gives a feasible bin packing of $I_H$, implying 
        $\OPT(I_H) \le \chi(H)$.
\end{proof}

Note that $\widetilde{\ALG}$ needs to know $n$, the number of vertices, in advance, whereas OHC was defined with $n$ being unknown to the algorithm. While it may be an interesting question of how much knowledge of $n$ can affect competitive ratio, it does not affect our results established below, since the lower bounds on OHC that we rely on work in the regime of $n$ known by the algorithm in advance. We are now ready to prove Theorem \ref{thm:lowerbounds}. 

        

\begin{proof}[Proof of Theorem \ref{thm:lowerbounds}]
    Let $k=B+1$. 
    Choose $n$ to be the largest integer such that
        ${n \choose k} \le d$,
    then, 
        $d < {n+1 \choose k} \le \frac{(n+1)^k}{k!}$,
    i.e.,
        $n > (k! \cdot d)^{1/k} - 1$. 
    Let
        $d' = {n \choose k}$.
    To show a lower bound for online $d'$-dimensional VBP, by Lemma \ref{lem:reduction_correct}, it suffices to provide a corresponding online coloring lower bound.
    To handle $d \ge d'$, simply add extra $d-d'$ many $0$'s into $d'$-dimensional vectors to get a $d$-dimensional VBP instance. Clearly, the same lower bound holds.  
        
    (1) $k=B+1=2$. When $\ALG$ is $FirstFit$ for the  OVBP, by our definition,  $\widetilde{ALG}$ is exactly  $FirstFit$ for online graph coloring. Generalizing $G$ in Figure \ref{fig:online_graph}, we get an online graph on $n$ vertices that is bipartite, but $FirstFit$ uses at least $\ceil{n/2} \ge n/2$ colors. 
    
    (2) $k=B+1=2$. In \cite{graph_lowerbd}, it is shown that for every online graph coloring algorithm, there exists an online graph on $n$ vertices that is $\log_2n$-colorable, but the algorithm uses at least 
        $2n / \log_2 n$
    colors. 
        
    (3) $k=B+1 \ge 3$. In \cite{hyperC_1}, it is shown that for every online hypergraph coloring algorithm, there exists an online $k$-uniform  hypergraph on $n$ vertices that is $2$-colorable, but the algorithm
    uses 
        $\ceil{n/(k-1)}$
    colors.
\end{proof}

The $[0,1]$ case can be handled by adapting a trick introduced in \cite{VBP_STOC}. Specifically, we modify the definition of \eqref{eq:def_online_incidence_matrix}  to get an online incidence matrix $I'_H$ of dimension ${n\choose k-1} \times n$. The rows of $I'_H$ will be labeled by $\epsilon \in {[n] \choose k-1}$, instead of hyperedges of $H$. By ``$\epsilon$ all arrived'', we mean that all $k-1$ vertices in $\epsilon$ have arrived. 
\begin{equation}    \label{eq:def_online_incidence_matrix_realcase}
    I'_H(\epsilon,v) 
    = 
    \begin{cases}
        0,  \quad &\text{if $v \not\in \epsilon$, and $\epsilon$ not all arrived}, \\
        1, \quad &\text{if $v \in \epsilon$}, \\
        0,  \quad &\text{if $v \not\in \epsilon$,  and $\epsilon$ all arrived, and $\epsilon \cup \{v\} \not\in E(H)$}, \\
        1/n,  \quad &\text{if $v \not\in \epsilon$,  and $\epsilon$ all arrived, and $\epsilon \cup \{v\} \in E(H)$}.
    \end{cases}
\end{equation}
By feeding $I'_H$ to the $[0,1]$ OVBP, one can similarly prove Lemma \ref{lem:reduction_correct}, hence the lower bounds follow. Indeed, let $\epsilon(I'_H)$ denote the sum of entries in row $\epsilon$, it is easy to see that 
    $k-1 \le \epsilon(I'_H) \le k-1 + (n-k+1)/n < k$.
Furthermore, if 
    $e = \epsilon \cup \{v\}$ 
is an online hyperedge of $H$ in which $v$ arrives the latest, then 
    $\sum_{u \in e} I'_H(\epsilon, u) = k-1+ 1/n > k-1$. 
Setting $B=k-1$ would guarantee that a feasible bin cannot contain any entire hyperedge $e \in E(H)$, hence feasible bins correspond to color classes, and vice versa. 

\begin{theorem}     \label{thm:lowerbounds_realcase}
    For $[0,1]$  $d$-dimensional OVBP with bin size $B\in \bbZ^+$, 
    \begin{enumerate}[(1)]
        \item for $B=1$, the competitive ratio of $FirstFit$ is 
        $\ge d / 4$. 

        \item for $B= 1$, the competitive ratio of any online algorithm is 
        $> 2d / (\log_2 d)^2$.

        \item for $B\ge 2$, the competitive ratio of any online algorithm is 
        $> ((B! \cdot d)^{1/B} - 1) / 2B$.
    \end{enumerate}
    Furthermore, the lower bounds can be achieved via $d$-dimensional OVBP instances with the following property: for (1) and (3), $\OPT$ of the instance is $2$; for (2), $\OPT$ of the instance is 
        $\le \log_2 d$.
\end{theorem}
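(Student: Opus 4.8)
The plan is to re-run the reduction of Section~\ref{sec:lower_bounds} essentially verbatim, the only change being that the $[0,1]$ OVBP algorithm is fed the matrix $I'_H$ of \eqref{eq:def_online_incidence_matrix_realcase} in place of $I_H$. Given an algorithm $\ALG$ for $[0,1]$ $d$-dimensional OVBP with bin size $B$, set $k=B+1$ and, for an online $k$-uniform hypergraph $H$ on $n$ vertices, define $\widetilde{\ALG}$ exactly as before: when a vertex $v$ arrives, build the online column $I'_H(\cdot,v)$ and color $v$ with the bin that $\ALG$ assigns it. First I would note this column is computable online: the only entries not determined at once sit in rows $\epsilon\in{[n]\choose k-1}$ with $v\notin\epsilon$ and all of $\epsilon$ already arrived, and whether $\epsilon\cup\{v\}$ is an edge of $H$ is exactly what the OHC model reveals upon the arrival of $v$.

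Next I would prove the analogue of Lemma~\ref{lem:reduction_correct}, which, as the discussion preceding the statement already indicates, reduces to two arithmetic observations: every row of $I'_H$ has sum strictly between $k-1$ and $k$ (the $k-1$ unit entries from the vertices of $\epsilon$, plus at most $n-k+1$ entries equal to $1/n$), and for any hyperedge $e=\epsilon\cup\{v\}$ in which $v$ arrives last, the column-sum over $e$ in row $\epsilon$ equals exactly $k-1+1/n$, which exceeds $B=k-1$. The second fact shows that a feasible bin contains no hyperedge, which simultaneously gives that $\widetilde{\ALG}$ produces a proper coloring and that $\chi(H)\le\OPT(I'_H)$; for the reverse inequality I would verify that a color class $S$ never overflows a row $\epsilon$ --- either $\epsilon\subseteq S$, in which case properness forbids any $v\in S\setminus\epsilon$ with $\epsilon\cup\{v\}\in E(H)$ so the row sum over $S$ is $k-1=B$, or $|S\cap\epsilon|\le k-2$ and the at most $n-k+1$ fractional entries contribute less than $1$ so the sum stays below $k-1$. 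Since $I'_H$ is built online and this correspondence between feasible bins and color classes is maintained throughout, $\widetilde{\ALG}(H)=\ALG(I'_H)$ and $\chi(H)=\OPT(I'_H)$.

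Finally I would instantiate the known OHC lower bounds exactly as in the proof of Theorem~\ref{thm:lowerbounds}, noting that now $I'_H$ has dimension ${n\choose k-1}$ rather than ${n\choose k}$: choose $n$ maximal with ${n\choose k-1}\le d$ and pad each column with $d-{n\choose k-1}$ zeros. For $B=1$ ($k=2$) this forces $n=d$; taking $\ALG=FirstFit$ makes $\widetilde{\ALG}=FirstFit$ for online graph coloring, and the bipartite generalisation of $G$ from Figure~\ref{fig:online_graph} forces at least $\ceil{n/2}$ colors against $\chi=2$, i.e.\ ratio $\ge d/4$; and the construction of \cite{graph_lowerbd} gives a $\log_2 n$-colorable online graph forcing at least $2n/\log_2 n$ colors, i.e.\ ratio $> 2d/(\log_2 d)^2$ with $\OPT\le\log_2 d$. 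For $B\ge2$ ($k\ge3$), $d<{n+1\choose k-1}\le(n+1)^{k-1}/(k-1)!$ gives $n>(B!\cdot d)^{1/B}-1$, and the $2$-colorable $k$-uniform online hypergraph of \cite{hyperC_1} forces $\ceil{n/(k-1)}\ge n/B$ colors against $\chi=2$, i.e.\ ratio $> ((B!\cdot d)^{1/B}-1)/(2B)$ with $\OPT=2$.

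I do not expect a genuine obstacle here: the whole argument is a transcription of Section~\ref{sec:lower_bounds}, and the stronger dependence on $d$ (each lower bound is of order $n$, up to the $\log$ factor in case (2), while here $d\sim n^{B}/B!$ instead of $d\sim n^{B+1}/(B+1)!$) is exactly what comes from shrinking the row index set from ${[n]\choose k}$ to ${[n]\choose k-1}$. The only point requiring care is the size of the fractional entry in \eqref{eq:def_online_incidence_matrix_realcase}: it must be positive, so that any monochromatic hyperedge overflows its row, yet small enough --- any value at most $1/(n-k+1)$ works, and $1/n$ is the convenient uniform choice --- that a color class never overflows a row.
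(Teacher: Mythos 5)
Your proposal is correct and follows essentially the same route as the paper: feed the modified matrix $I'_H$ of \eqref{eq:def_online_incidence_matrix_realcase} to the $[0,1]$ OVBP algorithm, use the row-sum facts ($k-1 \le \epsilon(I'_H) < k$ and sum $k-1+1/n$ over a hyperedge whose last vertex closes row $\epsilon$) to get the analogue of Lemma \ref{lem:reduction_correct}, then choose $n$ maximal with ${n\choose k-1}\le d$ and invoke the same OHC lower bounds as in Theorem \ref{thm:lowerbounds}. You in fact spell out the bin-feasibility of a color class (the two cases $\epsilon\subseteq S$ and $|S\cap\epsilon|\le k-2$) more explicitly than the paper, which simply asserts the correspondence.
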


\begin{proof}
    Let $k =B+1$. Now, we choose $n$ to be the largest integer such that
        ${n \choose k-1} \le d$,
    then, 
        $d < {n+1 \choose k-1}$,
    i.e.,
        $n > ((k-1)! \cdot d)^{1/(k-1)} - 1
        = (B! \cdot d)^{1/B} - 1$.
    The rest is the same.
\end{proof}

Finally, it is not hard to see that lower bounds for randomized online coloring algorithms also directly translate to lower bounds for OVBP randomized algorithms, via our online incidence matrix reduction. 

\begin{theorem}     \label{thm:lowerbounds_randomized_alg}
    For $\{0,1\}$ (resp. $[0,1]$) $d$-dimensional OVBP with bin size $B\in \bbZ^+$, 
    \begin{enumerate}[(1)]
        \item for $B= 1$, the competitive ratio of any randomized online algorithm is 
            $> (\sqrt{2d}-1) / 16 (\log_2 (\sqrt{2d} -1))^2$ 
        (resp. 
            $> d / 16 (\log_2 d)^2$).
        
        \item for $B\ge 2$, the competitive ratio of any randomized online algorithm is 
            $> (((B+1)! \cdot d)^{1/(B+1)} - 1) / 4(B+1)$
        (resp. 
            $> ((B! \cdot d)^{1/B} - 1) / 4(B+1)$).
    \end{enumerate}
    
    Furthermore, the lower bounds can be achieved via $d$-dimensional OVBP instances with the following property: for (1), $\OPT$ of the instance is at most 
        $\log_2 (\sqrt{2d} -1)$
        (resp. $4\log_2 d$); 
    for (2), $\OPT$ of the instance is 
        $2$ (resp. 2).
\end{theorem}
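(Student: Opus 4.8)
The plan is to mirror exactly the reduction used for the deterministic case (Lemma \ref{lem:reduction_correct}), but now tracking expected numbers of colors/bins, and then to plug in the known randomized lower bounds for online (hyper)graph coloring. The key observation is that the reduction $\ALG \mapsto \widetilde{\ALG}$ is \emph{oblivious}: the online vector fed to $\ALG$ upon arrival of vertex $v$ depends only on $H$ (and on $n$, which the algorithm is allowed to know), not on any coin flips of $\ALG$. Hence if $\ALG$ is a randomized OVBP algorithm, $\widetilde{\ALG}$ is a randomized online coloring algorithm with $\widetilde{\ALG}(H) = \ALG(I_H)$ as random variables on the same probability space, and the statements $\chi(H) = \OPT(I_H)$, $B = k-1$ remain deterministic facts. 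Therefore $\Ex[\widetilde{\ALG}(H)]/\chi(H) = \Ex[\ALG(I_H)]/\OPT(I_H)$, so any lower bound on the expected competitive ratio of randomized online hypergraph coloring transfers verbatim to OVBP via $I_H$ (and via $I'_H$ for the $[0,1]$ case, using the bound $k-1 \le \epsilon(I'_H) < k$ already established).

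Concretely, I would proceed as follows. First, for $B=1$ (i.e. $k=2$, the graph case): invoke the known randomized lower bound for online graph coloring — for every randomized online coloring algorithm there is an online graph on $n$ vertices that is $\log_2 n$-colorable on which the algorithm uses $\Omega(n/(\log_2 n)^2)$ colors in expectation (Halldórsson–Szegedy \cite{graph_lowerbd}, randomized version; the constant $1/16$ in the statement comes from making the constants in that bound explicit). Choosing $n$ maximal with $\binom{n}{2} \le d$ gives $n > \sqrt{2d}-1$, and feeding $I_H$ to the OVBP algorithm yields the claimed $(\sqrt{2d}-1)/16(\log_2(\sqrt{2d}-1))^2$ bound, with $\OPT = \chi(H) \le \log_2 n \le \log_2(\sqrt{2d}-1)$. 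For $B \ge 2$ (i.e. $k = B+1 \ge 3$): I would use the randomized lower bound for online hypergraph coloring from \cite{hyperC_randomized} — for every randomized online coloring algorithm there is a $2$-colorable online $k$-uniform hypergraph on $n$ vertices forcing $\Omega(n/(k-1))$ colors in expectation, the factor $1/4(B+1)$ again absorbing the explicit constant. Choosing $n$ maximal with $\binom{n}{k} \le d$ gives $n > ((B+1)!\cdot d)^{1/(B+1)} - 1$, and the reduction gives the stated bound with $\OPT = 2$. The $[0,1]$ variants are identical except one uses $I'_H$ and chooses $n$ maximal with $\binom{n}{k-1} \le d$, exactly as in the proof of Theorem \ref{thm:lowerbounds_realcase}; padding with extra zero coordinates handles $d$ larger than $\binom{n}{k}$ (resp. $\binom{n}{k-1}$).

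The only genuinely delicate point — and the one I would spell out carefully — is the precise form and attribution of the randomized online coloring lower bounds, together with bookkeeping of the universal constants so that the stated denominators ($16$, $4(B+1)$) are correct; the reduction itself is essentially a one-line consequence of the obliviousness remark above, since nothing in Lemma \ref{lem:reduction_correct} used determinism of $\ALG$. I would also note explicitly, as the paper already does for the deterministic case, that these randomized coloring lower bounds hold in the model where the algorithm knows $n$ in advance, which is exactly what $\widetilde{\ALG}$ requires.
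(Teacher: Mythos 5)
Your proposal matches the paper's approach exactly: the paper proves this theorem simply by observing that the online incidence matrix reduction (Lemma \ref{lem:reduction_correct}, and its $I'_H$ variant) carries over unchanged to randomized algorithms, and then citing the randomized coloring lower bounds of \cite{graph_lowerbd} and \cite{hyperC_randomized} with the same choice of $n$ as in Theorems \ref{thm:lowerbounds} and \ref{thm:lowerbounds_realcase}. Your write-up is in fact more explicit than the paper's two-sentence proof, particularly in spelling out the obliviousness of the reduction and the constant bookkeeping.
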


\begin{proof}
    The lower bound for randomized online graph coloring algorithm comes from \cite{graph_lowerbd}. 
    The lower bound for randomized online hypergraph coloring algorithm comes from \cite{hyperC_randomized}. 
\end{proof}

We remark that Theorem \ref{thm:lowerbounds_realcase}-(2) and Theorem \ref{thm:lowerbounds_randomized_alg}-(1) haven been proved in \cite{VBP_STOC}, while Theorem \ref{thm:lowerbounds_realcase}-(3) and Theorem \ref{thm:lowerbounds_randomized_alg}-(2) improve the corresponding bounds in \cite{VBP_STOC}.



\section{A multi-family partition problem}  \label{sec:multi-family}

Let $C \subseteq \bbN$. Let $\cF$ be a family of subsets of $C $. We allow multiple identical subsets in $\cF$, i.e., $\cF$ itself is a multiset. We call $\cF$ a \emph{multi-family} on $C$.
Let $|\cF|$ denote the cardinality of $\cF$. For example, the following $\cF$ is a multi-family on $\{1,2,3\}$ of cardinality $|\cF| = 12$: 
    \begin{equation}    \label{eq:example_multifamily}
        \cF = \big\{\{1\}, \{2\}, \{3\}, \{1\}, \{1,3\}, \{1,3\}, \{2,3\}, \{1,3\}, \{1,2, 3\}, \{1,2, 3\}, \{1,2, 3\}, \{1,2, 3\} \big\}.
    \end{equation}

\begin{definition}  \label{def:multi-family-diverse-star}
    Let 
        $C = \{c_1, \ldots, c_q\} \subseteq \bbN$.
    Let $\cF$ be a multi-family on $C$. 
    Let $p \in \bbN$. 
    We say $\cF$ is \emph{$(p,C)$-diverse} if there exists a partition 
        $\cF = \cF_1 \cup \cdots \cup \cF_q$
    such that for every $1 \le i \le q$: 
        (i) $|\cF_i| \ge p$, and
        (ii) $|T| \ge i$ for every $T \in \cF_i$. 
    We say $\cF$ is \emph{$(p, C)$-starry} if instead of condition (ii) we require (ii') $c_i \in T$ for every $T \in \cF_i$.
\end{definition}


Sometimes we omit the notation $(p,C)$ when it is clear from the context. For convenience, we call each part in the diverse or starry partition a \emph{block}. In the starry partition, we call block $\cF_i$ a $c_i$-star. Clearly, $\cF$ in \eqref{eq:example_multifamily}  admits a $(4,\{1,2,3\})$-diverse partition. In fact, $\cF$ also admits a $(4,\{1,2,3\})$-starry partition: 
\begin{equation}    \label{eq:starry-partition}
    \cF = \big\{\{1\}, \{1\}, \{1,3\}, \{1,3\} \big\} 
    \cup
    \big\{\{2\}, \{2,3\}, \{1,2, 3\}, \{1,2, 3\} \big\}
    \cup
    \big\{\{3\}, \{1,3\}, \{1,2, 3\}, \{1,2, 3\} \big\}.
\end{equation}

A $(p,C)$-starry multi-family needs not be $(p,C)$-diverse. For example, 
     $\cF = \big\{ \{1\}, \{1\}, \{2\}, \{2\} \big\}$ 
is $(2,\{1,2\})$-starry but is not $(2,\{1,2\})$-diverse.

\begin{theorem} \label{thm:diverse-implies-starry}
    For every $p\ge 1$, for every $C \subseteq \bbN$, every $(p,C)$-diverse multi-family is also $(p,C)$-starry.
\end{theorem}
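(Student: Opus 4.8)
The plan is to convert a diverse partition into a starry one by reassigning sets to stars greedily, processing the colors in two passes. Fix a $(p,C)$-diverse partition $\cF = \cF_1 \cup \cdots \cup \cF_q$ with $|C| = q$, so $|\cF_i| \ge p$ and every $T \in \cF_i$ satisfies $|T| \ge i$. The key structural observation is monotonicity: if $T \in \cF_i$ then $|T| \ge i$, so $T$ contains at least $i$ of the $q$ colors; in particular the sets assigned to high-index blocks are large and hence have many candidate stars they could join. Conversely, small sets (singletons, pairs) live only in low-index blocks. This suggests building the $c_i$-stars from $i = q$ down to $i = 1$: when we get to block index $i$, the sets still needing a home are exactly those that were originally in $\cF_1, \ldots, \cF_i$, each of size $\ge$ its original index, and we must peel off $p$ of them that all contain $c_i$.

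First I would set up the accounting. Let me process $i = q, q-1, \ldots, 1$ in that order, maintaining a pool $\cP$ of as-yet-unassigned sets, initialized to all of $\cF$. At step $i$, I want to select a sub-multiset $\cF_i' \subseteq \cP$ with $|\cF_i'| = p$ and $c_i \in T$ for all $T \in \cF_i'$, then remove $\cF_i'$ from $\cP$. The nontrivial claim is that such a selection always exists. To see it, note that at the start of step $i$ the pool $\cP$ consists of all of $\cF_1 \cup \cdots \cup \cF_i$ minus the $(q - i) p$ sets already removed in steps $q, \ldots, i+1$, so $|\cP| \ge \sum_{j=1}^{i} |\cF_j| - (q-i)p \ge ip - (q-i)p$, which is not obviously $\ge p$ — so a pure counting argument on $|\cP|$ is insufficient and I will need to count sets containing $c_i$ specifically. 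The right invariant is subtler: I should track, for each color, how many pool sets contain it, and argue that the greedy choices preserve enough slack. The main obstacle is precisely choosing which $p$ sets to remove at each step so that later steps remain feasible — a purely local greedy pick of "$p$ arbitrary sets containing $c_i$" can strand a color for a later step. This is why the theorem statement in the paper (Section~\ref{sec:multi-family}) emphasizes combining a forward greedy strategy with a backward greedy strategy.

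Accordingly, the refined plan is a two-phase greedy. In the \emph{forward} phase, process $i = 1, 2, \ldots, q$ and for block $\cF_i'$ greedily grab sets from the pool that contain $c_i$, but prioritize grabbing the \emph{smallest} available sets (those of size exactly $i$, which have no flexibility for the future); this uses up the "rigid" sets early. In the \emph{backward} phase, or interleaved as a correction, handle any block that came up short by borrowing from later blocks, which is legitimate because a set originally in $\cF_j$ with $j > i$ that contains $c_i$ can always be demoted to the $c_i$-star (its size $\ge j > i$ constraint played no role once we switch to the starry condition). The feasibility proof then runs by a Hall-type / exchange argument: I would assume for contradiction that at some step $i$ the number of pool sets containing $c_i$ is less than $p$, trace which sets must have been consumed to cause this deficit, and derive that some earlier block was over-served in a way that contradicts the prioritization rule (smallest-first) — i.e., a set that could have gone to $c_i$'s star was instead used for a color it had more room to serve. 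Concretely, I expect to show an invariant of the form: after step $i$ of the forward phase, for every color $c_\ell$ with $\ell > i$, the pool still contains at least $p \cdot (\text{something})$ sets containing $c_\ell$, using that such sets have size $\ge$ their original index and the singletons/small sets have already been cleared out. Once feasibility of all $q$ selections is established, setting $\cF_i'' = \cF_i'$ gives the desired $(p,C)$-starry partition after distributing any leftover pool sets arbitrarily (each block already has $\ge p$, so adding extras to, say, $\cF_1''$ — wait, leftovers may not contain $c_1$; instead each leftover set is nonempty, contains some $c_\ell$, and can be dumped into $\cF_\ell''$), completing the proof.

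The step I expect to be the genuine obstacle is pinning down the correct priority rule and the matching invariant so that the two greedy phases provably never get stuck; everything else (the reduction to selecting $p$ sets per color, the final cleanup of leftovers, the equivalence bookkeeping) is routine. I would therefore spend the bulk of the write-up on stating the invariant precisely and verifying it is maintained across one forward step, then closing with a short contradiction argument for the backward correction.
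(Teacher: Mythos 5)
There is a genuine gap: the entire content of the theorem is the feasibility claim that the greedy selections never get stuck, and your write-up explicitly defers it. You state that the invariant should have the form ``after step $i$, for every color $c_\ell$ with $\ell>i$, the pool still contains at least $p\cdot(\text{something})$ sets containing $c_\ell$,'' but you never pin down the invariant, never fix the priority rule precisely, and never carry out the Hall-type/exchange argument; you even flag this as ``the genuine obstacle.'' Note also that the heuristic underlying your backward pass is shakier than it looks: the diverse condition only bounds \emph{sizes}, so a set in a high-index block, while large, need not contain any particular color $c_i$, and it is exactly this mismatch between ``size at least $i$'' and ``contains $c_i$'' that a correct proof must negotiate quantitatively. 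As written, the proposal is a plan with the hard step missing, not a proof.

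For comparison, the paper closes this gap by inducting on $|C|$ rather than constructing all $q$ stars simultaneously. One builds only the $c_1$-star $\cG$: sweep the diverse blocks $\cF_1,\dots,\cF_{q+1}$ in order, greedily taking sets containing $c_1$ until $p$ are collected (always possible since every set in the last block has size $q+1$, hence contains $c_1$, and that block has at least $p$ sets). The key lemma is then that the contracted family $\phi(\cF-\cG)$, with $c_1$ deleted from every remaining set, is again $(p, C\setminus\{c_1\})$-diverse; this is where the backward greedy appears, rebalancing the blocks up to the stopping index $j^*$ of the forward sweep, with feasibility secured by the explicit count
\[
\sum_{i=k}^{j^*}\left|\cF_i^r\right| \;\ge\; \sum_{i=k}^{j^*}\bigl(p-\left|\cF_i^g\right|\bigr) \;\ge\; (j^*-k)\,p ,
\]
using $\sum_i \left|\cF_i^g\right| = \left|\cG\right| = p$. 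Induction then finishes the starry partition. If you want to salvage your approach, the cleanest repair is to adopt this one-color-at-a-time reduction: then you only need to justify a single star extraction per step, and your sought-after invariant becomes simply ``the residual family is diverse on the smaller color set,'' which is provable by the displayed counting argument rather than by an unspecified global invariant over all remaining colors.
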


\begin{proof}
    We use induction on $q = |C|$. The base case $q=1$ is trivial. Assume the claim is true for every $C$ of size $q$. Now, consider
        $C = \{c_1, \ldots, c_{q+1}\}$.
    Let $\cF$ be a $(p, C)$-diverse multi-family on $C$ with a diverse partition
        $\cF = \cF_1 \cup \cdots \cup \cF_{q+1}$.
    From this diverse partition, we use a greedy strategy to construct a starry partition, as follows. 
        
    Respecting the order 
        $\cF_1, \cdots, \cF_{q+1}$,
    successively and greedily choose subsets containing $c_1$ from each $\cF_i$ as many as possible, until we reach a sub-multi-family $\cG$ of cardinality $|\cG| = p$. Note that it is always possible to find such $\cG$ of cardinality $|\cG| = p$: this is because $\cF$ is diverse, in particular, $\cF_{q+1}$ is a sub-family of $\cF$ such that every $T \in \cF_{q+1}$ contains $c_1$ and $|\cF_{q+1}| \ge p$.         
    Now, $\cG$ is a desired $c_1$-star, to construct the other stars, we consider $\cF - \cG$.
    Define a map 
        $\phi: 2^{C} \to 2^{C'}$
    by 
        $\phi(T) = T-\{c_1\}$.
    Note that 
        $\phi(C) = C - \{c_1\}$,
    hence, $|\phi(C)| = q$. 
    We use the notation $\phi(\cF)$ to denote the resulting multi-family on $\phi(C)$. 
    Observe that it suffices to show 
        the multi-family $\phi(\cF - \cG)$ is $(p,\phi(C))$-diverse. 
    By induction, this implies 
        $\phi(\cF - \cG)$ is $(p,\phi(C))$-starry.
    Clearly, by pulling back the starry partition for $\phi(\cF - \cG)$ together with $\cG$ gives a $(p,C)$-starry partition for $\cF$.

    It remains to show $\phi(\cF - \cG)$ is $(p,\phi(C))$-diverse. Recall the diverse partition
        $\cF = \cF_1 \cup \cdots \cup \cF_{q+1}$.
    For each $\cF_i$, write
        $\cF_i = \cF_i^g \cup \cF_i^r$,
    where
        $\cF_i^g = \cF_i \cap \cG$
    and
        $\cF_i^r = \cF_i - \cF_i^g$.
    By the choice of $\cG$,      
        $\sum_{i=1}^{q+1} |\cF_i^g| = |\cG| = p\ge 1$.
    Let $1 \le j^*\le q+1$ be the largest index for which $\cF_{j^*}^g > 0$, in other words, $\cF_{j^*}$ is the place at which the greedy choice of $\cG$ stops.
    We have
    \begin{equation}    \label{eq:reduced-family}
        \phi(\cF - \cG) = \phi(\cF_1^r) \cup \cdots \cup \phi(\cF_{j^*}^r) \cup \cdots \cup \phi(\cF_{q+1}^r).
    \end{equation}        

    {\noindent Case 1:} $j^* + 1 \le i \le q+1$. For such $i$, by the choice of $j^*$, we have
           $\cF_i^g = \emptyset$, 
    or equivalently,
        $\cF_i^r = \cF_i$.
    Hence,
        $|\phi(\cF_i^r)| = |\phi(\cF_i)| = |\cF_i| \ge p$.
    Since every $T \in \cF_i$ satisfies
        $|T| \ge i$,
    we have every $\phi(T) \in \phi(\cF_i)$ satisfies
        $|\phi(T)| \ge i-1$.
    These 
        $\phi(\cF_{j^* +1}^r), \ldots, \phi(\cF_{q+1}^r)$ 
    will be the last $q - (j^* -1)$ blocks for a diverse partition of $\phi(\cF - \cG)$

    {\noindent Case 2:} $1 \le i \le j^*$. Consider those $i< j^*$ first. Since we choose $\cG$ greedily, this means
        $c_1 \not\in T$    for every $T \in \cF_i^r$, 
    since otherwise $T$ would have been chosen into $\cF_i^g$. This implies 
        $\phi(T) = T$ for every $T \in \cF_i^r$.
    Hence,
       $|\phi(T)| = |T| \ge i$ for every $T \in \cF_i^r$.
    Let us also note that, simply by definition, 
        $|\phi(T)| \ge j^* -1$ for every $T \in \cF_{j^*}^r$.            
    Now, consider the following $j^* -1$ objects:
        $\phi(\cF_1^r), \ldots, \phi(\cF_{j^*-1}^r) \cup \phi(\cF_{j^*}^r)$.
    What we have shown is that every $\phi(T)$ in the $i$-th object in the above list satisfies 
        $|\phi(T)| \ge i$. 
    However, these $j^* -1$ objects may not satisfy Condition (i) for the diverse partition. For this purpose, we use a \emph{backward} greedy strategy to reallocate: starting from the last (i.e., the $(j^*-1)$-th) object, keep arbitrarily $p$ subsets, and move the rest to the next, and repeat. Note that after this reallocation, every subset in the $i$-th block still has size at least $i$.  It only remains to show that this backward greedy strategy is feasible, in other words, at every step $1 \le k \le j^*-1$, the current object contains at least  $p$ subsets. Indeed, consider 
        $\phi(\cF_{k}^r), \ldots, \phi(\cF_{j^*-1}^r) \cup \phi(\cF_{j^*}^r)$.
    The total cardinality (i.e., the total number of subsets in these objects) is
    \[
        \sum_{i=k}^{j^*} |\phi(\cF_{i}^r)|
        = \sum_{i=k}^{j^*} |\cF_{i}^r|
        \ge \sum_{i=k}^{j^*} (p - |\cF_{i}^g|)
        \ge (j^* - k)p,
    \]
    where we used the fact that
        $\sum_{i=k}^{j^*} |\cF_{i}^g| \le |\cG| = p$.
    Since before step $k$, we have chosen exactly 
        $(j^* - 1 - k) p$ 
    subsets, therefore, at least $p$ subsets are left to be chosen at step $k$, as desired. 

    The blocks in Case 1 and Case 2 together give a $(p,\phi(\Omega))$-diverse partition for $\phi(\cF-\cG)$, as desired.
\end{proof}

\section{A lower bound for online coloring uniform hypertrees}  \label{sec:hypertree}

With Theorem \ref{thm:diverse-implies-starry}, we are ready to prove Theorem \ref{thm:hypertree_lb}. Our proof strategy is illustrated in Figure \ref{fig:hypertree}.
    \begin{figure}[t]        
        \centering
        \includegraphics[scale=.6]{./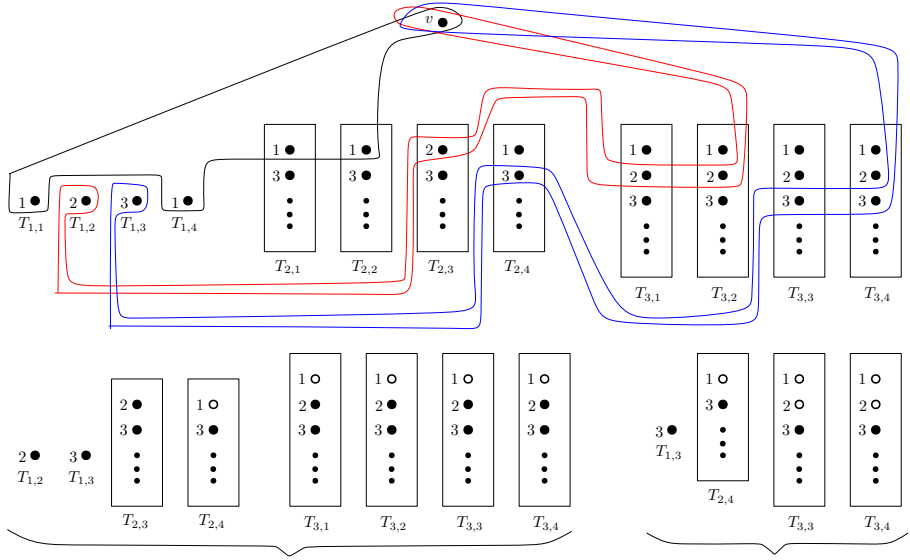}
        \caption{A vertex $v$ merges $F_3$ by a greedy selection of hyperedges. Here, $k=5, m=4$. The multi-family in \eqref{eq:example_multifamily} is the multi-family of color subsets of $F_3$, where its starry partition in \eqref{eq:starry-partition} corresponds to the hyperedge selection here. Below, the two multi-families of color subsets demonstrate how the inductive proof is carried out in the proof of Theorem \ref{thm:diverse-implies-starry}.}    \label{fig:hypertree}
    \end{figure}
    
    
\begin{proof}[Proof of Theorem \ref{thm:hypertree_lb}]
    Let $T_m$ denote the online hypertree of size $\le k^{m-1}$ that we will construct to satisfy 
        $\ALG(T_m) \ge m$.    
    We use induction on $m\ge 1$. 
    For base case $m=1$, $T_1$ consists of a single vertex, and this clearly works. 
    For $m \ge 2$, we construct $T_m$ inductively in two steps: 
    
    {\noindent STEP 1:} for every 
        $i=1, \ldots, m-1$, 
    successively construct $k-1$ disjoint copies of $T_i$, let $F_{m-1}$ denote this hyperforest consisting of exactly 
        $(m-1)(k-1)$ 
    disjoint hypertrees. $F_{m-1}$ is naturally partitioned into $m-1$ blocks $B_1, \ldots, B_{m-1}$, where $B_i$ consists of $k-1$ disjoint copies of $T_i$.
    
    {\noindent STEP 2:} create a new vertex $v$ to merge the disjoint hypertrees in $F_{m-1}$ into a single hypertree $T_m$. 

    By induction hypothesis, each copy of $T_i$ has size at most $k^{i-1}$ and $\ALG$ uses at least $i$ colors on it. Hence, 
    the number of vertices of $T_m$ is at most
        $1+\sum_{i=1}^{m-1} (k-1) k^{i-1} = k^{m-1}$
    as desired. 
    Also, 
        $\ALG(F_{m-1}) \ge \ALG(T_{m-1}) \ge m-1$.
    
    {\noindent Case 1:} $\ALG(F_{m-1}) \ge m$. In this case, STEP 2 can be done in an arbitrary way, e.g., for each block $B_i$, for every $T \in B_i$, arbitrarily choose a vertex from $T$, let $e_i$ be the edge containing $v$ together with these $k-1$ vertices from block $B_i$. Obviously, the obtained hypergraph $T_m$ is a hypertree satisfying
        $\ALG(T_m) \ge \ALG(F_{m-1}) \ge m$.
    
    {\noindent Case 2:} $\ALG(F_{m-1})=m-1$. Without loss of generality, assume
        $\ALG(F_{m-1}) = [m-1]$.
    For every $1\le j \le k-1$, 
    Let $T_{i,j}$ denote the set of colors used by $\ALG$ on the $j$-th copy of $T_i$. Then, 
        $T_{i,j} \subseteq [m-1]$,
    and by induction 
        $|T_{i,j}| \ge i$. 
    Let $\cT$ be the multi-family consisting of all these color subsets $T_{i,j}$, for
        $i=1,\ldots, m-1$ and $j=1, \ldots ,k-1$. 
    Obviously, the block partition 
        $F_{m-1} = B_1 \cup \cdots \cup B_{m-1}$
    induces a $(k-1, [m-1])$-diverse partition for the multi-family $\cT$.
    By Theorem \ref{thm:diverse-implies-starry}, 
        $\cT$ is also $(k-1, [m-1])$-starry.
    Consider a starry partition: 
        $\cT = \cT_1 \cup \cdots \cup \cT_{m-1}$.
    Since every color subset in $\cT_i$ contains color $i$, equivalently, this means that the corresponding hypertree contains a vertex with color $i$. 
    Construct a $k$-uniform hyperedge $e_i$ incident to $v$, by picking $k-1$ vertices of color $i$, one vertex from each of these hypertrees corresponding to the block $\cT_i$. Clearly, these $m-1$ hyperedges $e_1, \ldots, e_{m-1}$ force $\ALG$ to use a color that is not in $[m-1]$. Hence, $\ALG(T_m) \ge m$ as desired.
\end{proof}   

\section*{Acknowledgements}


We thank the anonymous referees of LAGOS 2023, whose comments led to a much improved version of the paper.

\bibliography{mybib}{}

\begin{thebibliography}{1}

\bibitem{VBP_STOC}
Yossi Azar, Ilan~Reuven Cohen, Seny Kamara, and Bruce Shepherd.
\newblock Tight bounds for online vector bin packing.
\newblock In {\em Proceedings of the forty-fifth annual ACM symposium on Theory
  of Computing}, pages 961--970, 2013.

\bibitem{graph_gyarfas1988line}
Andr{\'a}s Gy{\'a}rf{\'a}s and Jen{\"o} Lehel.
\newblock On-line and first fit colorings of graphs.
\newblock {\em Journal of Graph theory}, 12(2):217--227, 1988.

\bibitem{hyperC_randomized}
Magn{\'u}s~M Halld{\'o}rsson.
\newblock Online coloring of hypergraphs.
\newblock {\em Information processing letters}, 110(10):370--372, 2010.

\bibitem{graph_lowerbd}
Magn{\'u}s~M Halld{\'o}rsson and Mario Szegedy.
\newblock Lower bounds for on-line graph coloring.
\newblock {\em Theoretical Computer Science}, 130(1):163--174, 1994.

\bibitem{graph_upperbd}
L{\'a}szl{\'o} Lov{\'a}sz, Michael Saks, and William~T Trotter.
\newblock An on-line graph coloring algorithm with sublinear performance ratio.
\newblock {\em Discrete Mathematics}, 75(1-3):319--325, 1989.

\bibitem{hyperC_1}
Judit Nagy-Gy{\"o}rgy and Cs~Imreh.
\newblock Online hypergraph coloring.
\newblock {\em Information Processing Letters}, 109(1):23--26, 2008.

\bibitem{graph_vishwanathan1992randomized}
Sundar Vishwanathan.
\newblock Randomized online graph coloring.
\newblock {\em Journal of algorithms}, 13(4):657--669, 1992.

\end{thebibliography}
\bibliographystyle{plain}

\end{document}